\newcommand{\field}[1]{\mathbb{#1}}
\newcommand{\N}{\field{N}}
\newcommand{\R}{\field{R}}
\newcommand{\C}{\field{C}}
\newcommand{\Z}{\field{Z}}
\newcommand{\HH}{\mathcal H}
\newcommand{\EE}{\mathcal E}
\newcommand{\FF}{\mathcal F}
\newcommand{\QQ}{\mathcal Q}
\newcommand{\eps}{\varepsilon}
\newcommand{\ph}{\varphi}
\newcommand{\sprod}[2]{\mbox{$\left\langle #1,#2 \right\rangle$}}        
\newcommand{\supp}{\operatorname{supp}}
\newcommand{\Rea}{\operatorname{Re}}
\newcommand{\curl}{\operatorname{curl}}
\newtheorem{theorem}{Theorem}[section]
\newtheorem{lemma}[theorem]{Lemma}
\newtheorem{prop}[theorem]{Proposition}
\theoremstyle{plain}
\title{The Strong-Coupling Polaron in Electromagnetic Fields}
\author{
M.~Griesemer and D.~Wellig\footnote{Supported by the German Science Foundation (DFG), grant GR 3213/1-1}\\
Universit\"at Stuttgart, Fachbereich Mathematik\\
70550 Stuttgart, Germany}
\date{}
\begin{document}
\maketitle
\begin{abstract}
This paper is concerned with Fr\"ohlich polarons subject to external electromagnetic fields in the limit 
of large electron-phonon coupling.  To leading order in the coupling constant, $\sqrt\alpha$, the ground state energy is shown 
to be correctly given by the minimum of the Pekar functional including the electromagnetic fields, provided these fields in the Fr\"ohlich model 
are scaled properly with $\alpha$. As a corollary, the binding of two polarons in strong magnetic fields is obtained.
\end{abstract}

\section{Introduction}
The purpose of this paper is to determine the ground state energy $E(A,V,\alpha)$ of Fr\"ohlich polarons subject to external electromagnetic fields
$B=\curl A$ and $E=-\nabla V$ in the limit of large electron-phonon coupling, $\alpha\to\infty$. We show that  $E(A,V,\alpha)$, to leading order in $\alpha$, is given by 
the minimum of the Pekar functional including the electromagnetic fields, provided these fields in the Fr\"ohlich model 
are scaled properly with $\alpha$. Combining this result with our
previous work on the binding of polarons in the Pekar-Tomasevich
approximation, we prove here, for the first time, the existence of
Fr\"ohlich bipolarons in the presence of strong magnetic fields.
These results were announced in \cite{GHW2012}.

The Fr\"ohlich large polaron model without external fields has only one parameter, $\alpha$, which describes the strength of the electron-phonon interaction.
Hence the ground state energy $ E(\alpha)$ is a function of $\alpha$ only, and since 
$\alpha$ is not small for many polar crystals, one is interested in
the limit $\alpha\to\infty$. It had been conjectured long ago, and
finally proved by Donsker and Varadhan \cite{DV1983}, that 
\begin{equation}\label{DV}
    E(\alpha) = \alpha^2 E_P +o(\alpha^2),\qquad (\alpha\to\infty),
\end{equation}
where $E_P$ is the minimum of the Pekar functional 
\begin{equation}\label{Pekar}
        \int|\nabla\ph(x)|^2dx - \int\!\!\int\frac{|\ph(x)|^2|\ph(y)|^2}{|x-y|}\, dxdy,
\end{equation}
constrained by 
\begin{equation}\label{norm}
     \int |\ph(x)|^2\, dx = 1.
\end{equation}
Statement \eqref{DV} has later been reproved by Lieb and Thomas who
also provided a bound on the error of the size $O(\alpha^{9/5})$ \cite{LT1997}.
An interesting application of \eqref{DV} is that it reduces the
question of bipolaron formation, in the case $\alpha\gg 1$, to the analog question regarding the
minimal energies of the Pekar and the Pekar-Tomasevich functionals. For
these effective energy-functionals the binding of two polarons follows
from a simple variational argument, provided the electron-electron
repulsion constant belongs to the lower end of its physically
admissible range. The minimizer of
\eqref{Pekar}, \eqref{norm}, which is needed for the variational
argument, is well-known to exist \cite{Lieb1977,Lions1984}. 
This line of arguments, due to Miyao and Spohn \cite{MS2007}, to our knowledge provides the only mathematically
rigorous proof of the existence of bipolarons. While it assumes that
$\alpha \gg 1$, numerical work suggest that $\alpha\geq 6.6$ may be sufficient for binding \cite{VSPD1992}.

Whether or not polarons may form bound states if they are subject to
external electromagnetic fields, e.g. constant magnetic fields, is an
interesting open question. In view of \cite{MS2007,GHW2012}, this question calls for a generalization of
\eqref{DV} to systems including a magnetic field. In the present
paper, for a large class of scalar and vector potentials 
$V$ and $A$, respectively, we establish existence of a constant $C=C(A,V)$, such that 
\begin{equation}\label{main1}
      \alpha^2 E_P(A,V) \geq E(A_\alpha,V_\alpha,\alpha) \geq \alpha^2 E_P(A,V) - C\alpha^{9/5},
\end{equation}
where $A_{\alpha}(x) = \alpha A(\alpha x)$, $V_{\alpha}(x)=\alpha^2 V(\alpha x)$ and $E_P(A,V)$ is the infimum of the generalized Pekar functional 
\begin{equation}\label{Pekar-AV}
        \int|D_A\ph(x)|^2+ V(x)|\ph(x)|^2\, dx - \int\!\!\int\frac{|\ph(x)|^2|\ph(y)|^2}{|x-y|}\, dxdy
\end{equation}
constrained by \eqref{norm}. Here $D_A = -i\nabla+A$. Non-scaled electromagnetic potentials become negligible in the 
limit $\alpha\to \infty$. In fact, we show that $\alpha^{-2}E(A,V,\alpha)\to E_P$ as $\alpha\to \infty$.

As explained above, \eqref{main1} allows us to
explore the possibility of bipolaron formation in the external fields
$A,V$. The corresponding question concerning the effective theories of
Pekar and Tomasevich with electromagnetic fields was studied in
\cite{GHW2012}. It was found, under the usual
condition on the electron-electron repulsion (see above), that two polarons will bind provided the functional
\eqref{Pekar-AV} attains its minimum, which is the case, e.g., for
constant magnetic fields and $V\equiv 0$. This
leads to our second main result, the binding of two polarons in strong
constant magnetic fields, which follows from the more general Theorem~\ref{binding-thm}, below. Of course
it would be interesting to know whether or not the binding of polarons is enhanced 
by the presence of a magnetic field, as conjectured in
\cite{BD1996}. This question is not addressed in the present paper.

The strong coupling result \eqref{DV} was generalized in
the recent work \cite{AL2012} to many-polaron systems, and one of us, 
Wellig, is presently extending this work to include magnetic
fields. In work independent and simultaneous to ours, Frank and Geisinger have
analyzed the ground state energy of the polaron for \emph{fixed} $\alpha>0$ in
the limit of large, constant magnetic field, i.e., $A=B\wedge
x/2$ and $|B|\to\infty$ \cite{FG2013}. They show that the
ground state energy, both in the Fr\"ohlich and the Pekar models, is
given by $|B| - \frac{\alpha^2}{48}(\ln |B|)^2$ up to corrections of
smaller order. The question of binding is not addressed, however, and
seems to require a similar analysis of the ground state energy of the Pekar-Tomasevich model.
For the binding of $N>2$ polarons in the Pekar-Tomasevich model with and
without external magnetic fields we refer to \cite{Lew2011} and
\cite{AG2013}, respectively. For the thermodynamic stability, the
non-binding, and the binding-unbinding transition of multipolaron
systems the reader may consult the short review  \cite{FLST2012}  and
the references therein.

\section{The Lower Bound}
\label{sec:strong-field}

In this section we study the strong coupling limit of the minimal energy of the polaron subject to given external electric and 
magnetic fields. To exhibit the general validity of the method we shall allow for fairly general electric and 
magnetic potentials $V:\R^3\to\R$ and $A:\R^3\to\R^3$. We assume that
$A_k\in L_{\rm loc}^2(\R^3)$, $V\in L_{\rm loc}^1(\R^3)$  and that for
any $\eps>0$ and all  $\ph\in C_0^{\infty}(\R^3)$,
\begin{equation}\label{V-Delta-bound}
   |\sprod{\ph}{V\ph}|\leq \eps\|\nabla\ph\|^2+C_\eps\|\ph\|^2.
\end{equation}
This is satisfied, e.g., when $V\in L^{3/2}(\R^3)+L^\infty(\R^3)$, see
\cite{RS2} or the proof of \eqref{hardy-sobolev}.
Of course, here  $\sprod{\ph}{V\ph}$ denotes a quadratic form defined by
$\sprod{\ph}{V\ph} = \int V|\ph|^2dx$. Since $\sprod{D_A\ph}{D_A\ph}$ on $H_A^1(\R^3):=\{\ph\in L^2(\R^3)\mid D_A\ph\in L^2(\R^3)\}$ is a closed quadratic form 
with form core $C_0^{\infty}(\R^3)$, it follows, by the KLMN-theorem, that $\sprod{D_A\ph}{D_A\ph}+\sprod{\ph}{V\ph}$ is the 
quadratic form of a unique self-adjoint operator $D_A^2+V$ whose form domain is  $H_A^1(\R^3)$. 
Our assumptions allow for constant magnetic fields, the case in which we are most interested.

We shall next define the Fr\"ohlich model associated with $V$ and $A$ 
through a quadratic form, which we shall prove to be semi-bounded. 
In this way the introduction of an ultraviolet cutoff is avoided. However, such a cutoff is used in the proof of semi-boundedness. 
The Hilbert space of the model in this section is the tensor product $\HH = L^2(\R^3)\otimes \FF$, 
where $\FF$ denotes the symmetric Fock space over $L^2(\R^3)$, and the form domain is $\QQ:=H_A^1(\R^3)\otimes \FF_0$ 
where 
$$\FF_0:=\big\{(\ph^{(n)})\in\FF | \ph^{(n)}\in C_0(\R^{3n}),\ \ph^{(n)}=0\ \text{for almost all}\ n\big\}.$$ 
We define a quadratic form $H$ on $\QQ$ by 
\begin{align}
   H(\psi) &:= \sprod{\psi}{(D_A^2+V)\psi} + N(\psi) + \sqrt{\alpha}W(\psi)\nonumber\\
   N(\psi)&:= \int \|a(k)\psi\|^2\,dk \label{number}\\
   W(\psi) &:= \frac{1}{\sqrt{2}\pi}\int \frac{dk}{|k|}\left(\sprod{\psi}{e^{ikx}a(k)\psi} + \sprod{e^{ikx}a(k)\psi}{\psi}\right).\label{e-p-inter}
\end{align}
Note that $a(k)$ is a well-defined, linear operator on $\FF_0$ but $a^{*}(k)$ is
not and neither is $\int |k|^{-1}e^{-ikx}a^{*}(k)\, dk$,
because $|k|^{-1}e^{-ikx}$ is not square integrable with respect to $k$.
The Theorems~\ref{limit-thm} and \ref{weak-field-thm} in the next section relate
$$
       E(A,V,\alpha) := \inf\{ H(\psi) \mid \psi\in\QQ,\ \|\psi\|=1\}
$$
to the minimum, $E_P(A,V)$, of the Pekar functional \eqref{Pekar-AV} on the unit sphere $\|\ph\|=1$.
For the proofs it is convenient to introduce a coupling constant
$\alpha$ in the Pekar functional and to
define  $E_P(A,V,\alpha)$ as the minimum of
\begin{equation*}
        \EE_{\alpha}(A,V,\ph) = \int|D_A\ph(x)|^2+ V(x)|\ph(x)|^2\, dx - \alpha\int\!\!\int\frac{|\ph(x)|^2|\ph(y)|^2}{|x-y|}\, dxdy
\end{equation*}
with the constraint $\|\ph\|=1$. We set
$\EE(\ph)=\EE_{\alpha=1}(0,0,\ph)$, which is the Pekar functional \eqref{Pekar}. It is easy to check that 
\begin{equation}\label{scaling-pekar}
      E_P(A_\alpha,V_\alpha,\alpha) = \alpha^2 E_P(A,V)
\end{equation}
where $A_\alpha(x) = \alpha A(\alpha x)$, $V_\alpha(x) = \alpha^2
V(\alpha x)$.

The number $E_P(A,V,\alpha)$ is finite because $E_P(A,V,\alpha)\geq
E_P(0,V,\alpha)$, by the diamagnetic inequality \cite{LiebLoss}, and
$E_P(0,V,\alpha)>-\infty$ by assumption on $V$ and a simple exercise
using the H\"older and Hardy inequalities. Our key result is the following lower bound on $E(A,V,\alpha)$:


\begin{prop}\label{the-lower-bound}
Suppose that $A,V$ satisfy the assumptions described above and $\beta=1-\alpha^{-1/5}$. Then 
\begin{equation}\label{fb1}
   E(A,V,\alpha) \geq \beta E_P(A,\beta^{-1}V,\alpha \beta^{-2}) -
   O(\alpha^{9/5}),\quad (\alpha\to\infty),
\end{equation}
the error bound being independent of $A$ and $V$.
\end{prop}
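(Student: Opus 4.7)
My plan is to follow the strategy of Lieb and Thomas~\cite{LT1997}, adapted to include the external potentials $A,V$. The proof has four main ingredients: (i) a momentum cutoff $K$ for the phonons; (ii) a Cauchy--Schwarz/Bogoliubov bound absorbing the high-momentum coupling $\sqrt\alpha W_>$ into a fraction $1-\beta$ of the number operator $N$; (iii) a smooth partition of unity $\{J_j^2\}$ on cubes of side $L$ combined with the magnetic IMS localization $D_A^2=\sum_jJ_jD_A^2J_j-\sum_j|\nabla J_j|^2$ (whose remainder is gauge-invariant since $|\nabla J_j|$ is); and (iv) a per-cube coherent-shift/Weyl transformation in the low-momentum phonon sector whose off-diagonal cross-terms generate the Pekar--Coulomb self-attraction $-2\alpha\beta^{-1}D(\rho_\psi,\rho_\psi)$.

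With $\beta=1-\alpha^{-1/5}$ and $v(k,x)=(\sqrt2\,\pi|k|)^{-1}e^{ikx}$, so that $|v(k,x)|^2=(2\pi^2k^2)^{-1}$ is spatially constant, step (ii) yields
\[
  \sqrt\alpha\,W_>\ \geq\ -(1-\beta)N-\frac{2\alpha}{\pi(1-\beta)K},
\]
a c-number cost of order $\alpha/((1-\beta)K)$. For the low-momentum piece I replace $v(k,x)$ by its cube-centered value $v(k,x_j)$ on $\supp J_j$: using $|v(k,x)-v(k,x_j)|\le L/(\sqrt2\,\pi)$ and Cauchy--Schwarz this costs $\lesssim \eta N+\eta^{-1}\alpha L^2K^3$ for any $\eta>0$. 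The resulting cube-coarsened coupling $\tilde W_<=\sum_jJ_j^2\cdot 2\,\mathrm{Re}\int_{|k|<K}v(k,x_j)a(k)\,dk$ factorizes electron and phonon, and a coherent-shift/Weyl transformation diagonalizes the low-momentum phonon modes; after normal-ordering and combining with the kinetic operator $\beta D_A^2$, the surviving c-number is $-\alpha\beta^{-1}\sum_{j,l}J_j^2(x)J_l^2(x)\int_{|k|<K}e^{ik(x_j-x_l)}/(2\pi^2k^2)\,dk$, whose off-diagonal $j\neq l$ part approximates $-2\alpha\beta^{-1}D(\rho_\psi,\rho_\psi)$ as $L\to 0$, $K\to\infty$. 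Paired with the diamagnetic/Hoffmann--Ostenhof inequality, the effective electron functional becomes $\beta\int|D_A\phi|^2+V|\phi|^2-2\alpha\beta^{-1}D(|\phi|^2,|\phi|^2)$ at $\phi=\sqrt{\rho_\psi}$, whose infimum is exactly $\beta E_P(A,\beta^{-1}V,\alpha\beta^{-2})$.

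The optimization $K\sim \alpha^{4/5}$, $L\sim \alpha^{-3/5}$, $\eta\sim \alpha^{-1/5}$ balances the four error sources---the high-momentum c-number, the cube-replacement Cauchy--Schwarz, the IMS remainder $O(L^{-2})$, and the Coulomb-discretization error $O(\alpha L)$---at $O(\alpha^{9/5})$, uniformly in $A,V$. The principal obstacle I anticipate is step~(iv): the pointwise-in-$k$ completion of squares on its own yields only the $x$-independent c-number $-2\alpha K/\pi$ (since $|v(k,x)|^2$ carries no spatial information), so the genuine density--density Coulomb attraction must be extracted from the off-diagonal cube cross-terms through careful bookkeeping of the Weyl dressing against $D_A^2$. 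The modified kinetic coefficient $\beta$ and Coulomb coefficient $\alpha\beta^{-1}$ in the target functional $\beta E_P(A,\beta^{-1}V,\alpha\beta^{-2})$ reflect both the fraction of $N$ spent on step~(ii) and the magnetic gap $|D_A\sqrt{\rho_\psi}|^2-|\nabla\sqrt{\rho_\psi}|^2=A^2\rho_\psi$ that the diamagnetic inequality alone cannot close.
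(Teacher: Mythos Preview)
Your step~(ii) fails outright. With $v(k,x)=(\sqrt2\pi|k|)^{-1}e^{ikx}$ one has $|v(k,x)|^2=(2\pi^2|k|^2)^{-1}$, and the pointwise completion of squares $\sqrt\alpha(\bar v a+v a^*)\ge-(1-\beta)a^*a-\alpha(1-\beta)^{-1}|v|^2$ integrated over $\{|k|>K\}$ produces the c-number $\alpha(1-\beta)^{-1}(2\pi^2)^{-1}\int_{|k|>K}|k|^{-2}\,dk=+\infty$, not the finite $\frac{2\alpha}{\pi(1-\beta)K}$ you wrote. The ultraviolet tail of the Fr\"ohlich coupling is \emph{not} number-operator bounded. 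The paper (following Lieb--Thomas) instead uses the commutator identity $e^{ikx}a(k)=\sum_\ell\big[D_{A,\ell},\frac{k_\ell}{|k|^2}e^{ikx}a(k)\big]$, which gains an extra $|k|^{-1}$ and allows the high-momentum piece to be absorbed into a fraction $\frac{8\alpha}{\pi\Lambda}$ of the \emph{kinetic} energy $D_A^2$ together with $N-N_{B_\Lambda}$, at a c-number cost of only $\tfrac12$. This is the origin of the coefficient $\beta$ in front of $D_A^2$, and it forces $\Lambda\sim\alpha^{6/5}$, not $K\sim\alpha^{4/5}$.

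Your step~(iv) is also not a proof, and the obstacle you anticipate is real for the route you chose. A single Weyl shift is a unitary; it cannot by itself turn the quantum lower bound into a classical Pekar minimization, and the density $\rho_\psi$ carries no phase information, so Hoffmann--Ostenhof would only yield $\|\nabla\sqrt{\rho_\psi}\|^2$, leaving exactly the uncontrolled $A^2\rho_\psi$ gap you mention. The paper avoids this entirely: after localizing the electron to one cube it discretizes the low-momentum phonons into finitely many block modes $a_n$ (indexed by \emph{momentum} cells, not position cubes) and then uses the coherent-state \emph{lower-symbol} resolution $a_n^*a_n=\int(|z_n|^2-1)|z\rangle\langle z|\,dz$. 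For each $z$ this produces a genuine vector $\psi_z\in L^2(\R^3)\otimes\FF(M^\perp)$, so the magnetic kinetic energy $\|D_A\psi_z\|^2$ appears directly and one lands on $\beta\EE_\mu(A,\beta^{-1}V,\psi_z/\|\psi_z\|)$ with no diamagnetic step at all. The price is the anti-Wick correction $|\Lambda_P|\sim(\Lambda/P)^3$, which is what forces the momentum discretization and fixes the remaining parameters.
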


The proof of Proposition~\ref{the-lower-bound} is done in several
steps following \cite{LT1997}. Some of them can be taken over verbatim
upon the substitution $-i\nabla_x \to -i\nabla_x+A(x)$. Surprisingly,
the translation invariance that seemed to play some role in
\cite{LT1997} is not needed for the arguments to work. For the
convenience of the reader we at least sketch the main ideas.

To begin with, we introduce a quadratic form $\sprod{\psi}{H_\Lambda\psi}$ on $\QQ$ in terms of 
$$
    H_\Lambda:= \beta D_A^2 +V+N_{B_{\Lambda}} + \frac{\sqrt{\alpha}}{\sqrt{2}\pi}\int_{B_\Lambda}\frac{dk}{|k|}(e^{ikx}a(k) + e^{-ikx}a^{*}(k))
$$
where $\beta:=1-\frac{8\alpha}{\pi \Lambda}$, $B_\Lambda:=\{k\in\R^3:|k|\leq \Lambda\}$ and generally, for subsets $\Omega\subset\R^3$,
$$
       N_\Omega := \int_{\Omega} a^{*}(k)a(k)\, dk.
$$
The quadratic form $H_\Lambda$ is bounded below provided that $\Lambda > 8\alpha/\pi$.

\begin{lemma}\label{semibound}
In the sense of quadratic forms on $\QQ$, for any $\Lambda>0$,
$$
      H(\psi) \geq \sprod{\psi}{(H_\Lambda -\frac{1}{2})\psi}.
$$
\end{lemma}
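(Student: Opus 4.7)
My plan for the proof of Lemma \ref{semibound}:

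\textbf{First}, I would subtract $H_\Lambda$ from $H$ at the level of quadratic forms on $\QQ$. Since the $V$-term cancels, the kinetic terms differ by $(1-\beta)D_A^2$ with $1-\beta = 8\alpha/(\pi\Lambda)$, the number terms by $N_{B_\Lambda^c}=\int_{|k|>\Lambda}a^*(k)a(k)\,dk$, and the interaction terms by $\sqrt{\alpha}\,W_{B_\Lambda^c}$, where $W_{B_\Lambda^c}$ denotes the form $W$ with its $k$-integral restricted to $|k|>\Lambda$, the lemma is equivalent to the UV-tail estimate
$$
(1-\beta)\|D_A\psi\|^2 + \|N_{B_\Lambda^c}^{1/2}\psi\|^2 + \sqrt{\alpha}\,W_{B_\Lambda^c}(\psi) \;\geq\; -\tfrac12\|\psi\|^2.
$$

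\textbf{Second}, the natural first attempt is to complete the square pointwise in $k$ using the phonon operators alone. Because $e^{\pm ikx}$ acts on a different tensor factor than $a(k),a^*(k)$,
$$
a^*(k)a(k)+\tfrac{\sqrt{\alpha}}{\sqrt{2}\pi|k|}\bigl(e^{ikx}a(k)+e^{-ikx}a^*(k)\bigr) \;=\; \Bigl|a(k)+\tfrac{\sqrt{\alpha}\,e^{-ikx}}{\sqrt{2}\pi|k|}\Bigr|^2-\tfrac{\alpha}{2\pi^2|k|^2},
$$
so that $N_{B_\Lambda^c}+\sqrt{\alpha}\,W_{B_\Lambda^c}\geq -(\alpha/2\pi^2)\int_{|k|>\Lambda}dk/|k|^2$. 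The remainder integral is $-\infty$, reflecting the UV divergence of the Fr\"ohlich field; a bare shift cannot yield a finite bound.

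\textbf{Third}, to cure the divergence I would spend the extra kinetic-energy reserve $(1-\beta)\|D_A\psi\|^2$ to regularize the shift at high $|k|$, following Lieb--Thomas \cite{LT1997}. The idea is to dress the annihilation operator, $\tilde{a}(k)=a(k)+g(k,x,D_A)$, with $g$ chosen so that (i) the cross terms still reproduce $\sqrt{\alpha}\,W_{B_\Lambda^c}$, and (ii) the self-energy $-\int|g|^2$, after commuting $g$ through $D_A$, splits into a finite $-\tfrac12\|\psi\|^2$ plus a term controlled by $(1-\beta)\|D_A\psi\|^2$. The key point is that one now encounters the convergent integral $\int_{|k|>\Lambda}dk/|k|^4=4\pi/\Lambda$, which after multiplication by $\alpha$-factors gives precisely the $8\alpha/(\pi\Lambda)$ in $\beta$. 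Crucially, the only commutator needed is $[D_A,e^{ikx}]=k\,e^{ikx}$, which is insensitive to $A$; this is the reason the Lieb--Thomas manipulations go through verbatim under $-i\nabla\to D_A$ and why the constants in the lemma depend on neither $A$ nor $V$.

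\textbf{Main obstacle.} The hardest step is the rigorous construction and bookkeeping of the dressed shift, since the underlying field $\phi$ is not an operator (its form factor $1/|k|$ is not in $L^2$): all manipulations must stay inside the quadratic-form framework already set up in the paper. Once the dressing is rigorously implemented on $\QQ$, the remaining inequalities are standard Cauchy--Schwarz estimates producing the explicit coefficients $8\alpha/(\pi\Lambda)$ in front of the kinetic energy and the clean constant $1/2$.
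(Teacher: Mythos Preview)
Your plan is correct and follows the paper's approach. The paper's proof is precisely the Lieb--Thomas argument you describe: it writes the UV tail of $W$ via the commutator identity $e^{ikx}a(k)=\sum_\ell[D_{A,\ell},\frac{k_\ell}{|k|^2}e^{ikx}a(k)]$ (your observation $[D_A,e^{ikx}]=k\,e^{ikx}$), packages this into operators $Z_\ell=\frac{1}{\sqrt{2}\pi}\int_{|k|>\Lambda}\frac{k_\ell}{|k|^3}e^{ikx}a(k)\,dk$, and then estimates $\sqrt{\alpha}\sum_\ell[D_{A,\ell},Z_\ell-Z_\ell^*]$ by Cauchy--Schwarz against $(1-\beta)D_A^2$, $N_{B_\Lambda^c}$, and $\tfrac12$; your ``dressed shift'' language is a slightly different wrapping of the same computation, and the convergent integral $\int_{|k|>\Lambda}|k|^{-4}\,dk=4\pi/\Lambda$ that produces the coefficient $8\alpha/(\pi\Lambda)$ is exactly the one the paper uses.
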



This lemma, without electromagnetic fields, is due to Lieb and Thomas \cite{LT1997}. Its proof is based on the operator identity 
\begin{equation}\label{com-id}
     e^{ikx}a(k) = \sum_{\ell=1}^3 \Big[D_{A,\ell},\frac{k_\ell}{|k|^2}e^{ikx}a(k)\Big]
\end{equation}
where $D_{A,\ell}=-i\partial_{x_\ell} + A_\ell(x)$. Obviously, $A(x)$ plays no role in \eqref{com-id} as it drops out of the commutator,
but we need it for the estimates to follow. For any given $\Lambda>0$ and $x\in\R^3$ we define 
the Fock space operators
\begin{align*}
    \phi_{\Lambda}(x) &:= \frac{1}{\sqrt{2}\pi}\int_{B_\Lambda}(e^{ikx}a(k)+e^{-ikx}a^{*}(k))\frac{dk}{|k|}\\
    Z_\ell(x) &:= \frac{1}{\sqrt{2}\pi} \int_{\R^3\backslash B_\Lambda} \frac{k_\ell}{|k|^3}e^{ikx}a(k)\,dk
\end{align*}
and we extend them to operators $\phi_{\Lambda},Z_\ell$ on $\HH$ by setting $(\phi_{\Lambda}\psi)(x):=\phi_{\Lambda}(x)\psi(x)$, $(Z_\ell\psi)(x):=Z_\ell(x)\psi(x)$ for 
$\psi\in\HH \simeq L^2(\R^3;\FF)$.
Then, by \eqref{com-id}, the electron-phonon interaction $W$ can be
written as
\begin{equation}\label{lm-sb1}
    W(\psi) = \sprod{\psi}{\left(\phi_{\Lambda}+ \sum_{\ell=1}^3[D_{A,\ell},Z_\ell-Z_\ell^{*}]\right)\psi}.
\end{equation}
Following \cite{LT1997} one now shows that 
\begin{equation}\label{lm-sb2}
      \sqrt{\alpha}\sum_{\ell=1}^3[D_{A,\ell},Z_\ell-Z_\ell^{*}] \geq -\frac{8\alpha}{\pi \Lambda}D_A^2 - (N-N_{B_\Lambda})-\frac{1}{2}.
\end{equation}
The Lemma~\ref{semibound} follows from \eqref{lm-sb1} and \eqref{lm-sb2}. 

The next step is to \emph{localize the electron} in a box of side length $L$. To this end we define the localization function
$$
    \ph(x) = \begin{cases}\prod_{j=1}^3 \cos(\frac{\pi}{L}x_j) & \text{for}\ |x_j|\leq L/2,\\ 0 & \text{otherwise,}\end{cases}
$$
and $\ph_y(x):=\ph(x-y)$.


\begin{lemma}\label{local-electron}
For given $\Delta E>0$ define $L>0$ by $3\beta(\frac{\pi}{L})^2=\Delta E$ and let 
$\ph$ be as above. Then for every non-vanishing $\psi\in \QQ$ there exists a point $y\in \R^3$, such that $\ph_y\psi\neq 0$ and 
$$
    \sprod{\ph_y\psi}{H_\Lambda\ph_y\psi} \leq (E+\Delta E)\|\ph_y\psi\|^2,
$$
where $E:=\sprod{\psi}{H_\Lambda\psi}$.
\end{lemma}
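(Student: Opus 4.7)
The plan is to prove the lemma by an averaged IMS-type localization: first I would establish the two identities
$$\int \langle \varphi_y\psi, H_\Lambda \varphi_y\psi \rangle \, dy = c\langle\psi,H_\Lambda\psi\rangle + c\Delta E \|\psi\|^2, \qquad \int \|\varphi_y\psi\|^2\, dy = c\|\psi\|^2,$$
with $c=\int\varphi^2=(L/2)^3$, and then extract the desired $y$ by a mean-value argument. The benefit of averaging over the translation $y\in\R^3$ rather than picking some particular $y$ is that all cross terms originating from $[D_A,\varphi_y]$ disappear. Indeed, since $\int_{\R^3}\varphi_y(x)^2\,dy$ is independent of $x$, we have $\int \varphi_y\,\partial_{x_j}\varphi_y\, dy = \tfrac{1}{2}\partial_{x_j}\int\varphi_y^2\, dy=0$ for each $j$.

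For the main computation I would note that $\varphi_y$ is multiplication by the scalar function $\varphi(\cdot-y)$ in the electron variable and is the identity on Fock space; it therefore commutes with $V$, with $N_{B_\Lambda}$, and with the multiplication operator $\phi_\Lambda(x)$, so each of those terms simply contributes $c$ times the corresponding expectation in $\psi$. For the kinetic piece, the Leibniz rule
$$D_A(\varphi_y\psi) = \varphi_y D_A\psi - i(\nabla\varphi)(\cdot-y)\psi$$
yields
$$\|D_A(\varphi_y\psi)\|^2 = \langle\psi,\varphi_y^2 D_A^2\psi\rangle + \||\nabla\varphi_y|\psi\|^2 + 2\operatorname{Re}\langle \varphi_y D_A\psi,-i(\nabla\varphi_y)\psi\rangle.$$
Integrating in $y$ annihilates the cross term by the cancellation above, leaving $c\langle\psi,D_A^2\psi\rangle + c'\|\psi\|^2$ with $c'=\int|\nabla\varphi|^2 = 3(\pi/L)^2 c$. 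Multiplying by $\beta$ and using the defining relation $3\beta(\pi/L)^2=\Delta E$ converts $\beta c'$ into exactly $c\Delta E$. Summing all contributions gives the first identity above.

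Finally, assume $\|\psi\|=1$ by homogeneity and set $h(y):=\langle\varphi_y\psi,H_\Lambda\varphi_y\psi\rangle - (E+\Delta E)\|\varphi_y\psi\|^2$. The first identity shows $\int h\, dy = 0$, while $h(y)=0$ whenever $\varphi_y\psi = 0$; since $\int\|\varphi_y\psi\|^2\, dy = c > 0$, the set $\{y:\varphi_y\psi\neq 0\}$ has positive Lebesgue measure, so $h\leq 0$ at some point in this set, yielding the claim. The only mildly technical point is justifying the interchange of integration orders (Fubini) and the integration-by-parts used in the cross-term cancellation under the weak regularity assumed on $A$ and $V$; this reduces to a routine approximation argument exploiting the form-core density of smoother elements of $\QQ$ together with the fact that $\varphi$ is Lipschitz and compactly supported.
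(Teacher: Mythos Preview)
Your argument is correct and is essentially the same averaged IMS localization the paper uses: the paper derives the identity $\int \sprod{\ph_y\psi}{(H_\Lambda-E-\Delta E)\ph_y\psi}\, dy = 0$ from the pointwise formula $\ph_y D_A^2 \ph_y=D_A\ph_y^2 D_A + \ph_y(-\Delta\ph_y)$ together with $-\Delta\ph=3(\pi/L)^2\ph$, which is just a repackaging of your cross-term cancellation and the computation $\int|\nabla\ph|^2 = 3(\pi/L)^2\int\ph^2$. One small slip: in your displayed expansion the first term should read $\|\ph_y D_A\psi\|^2=\sprod{D_A\psi}{\ph_y^2 D_A\psi}$ rather than $\sprod{\psi}{\ph_y^2 D_A^2\psi}$ (these differ before integration in $y$), but after integrating over $y$ they coincide, so your conclusion is unaffected.
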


\begin{proof}
Using $ \ph_y D_A^2 \ph_y=D_A\ph_y^2 D_A + \ph_y(-\Delta\ph_y)$ and $\beta\ph_y(-\Delta \ph_y) = \ph_y^2\Delta E$ one shows that 
$$
      \int \sprod{\ph_y\psi}{(H_\Lambda-E-\Delta E)\ph_y\psi}\, dy = 0,
$$
which proves the lemma.
\end{proof}

The Lemma~\ref{local-electron} is to be read as a bound on $E=\sprod{\psi}{H_\Lambda\psi}$ from below:
using that $H_\Lambda$ is translation invariant, except for the terms involving 
$A$ and $V$, it implies together with Lemma~\ref{semibound} that
\begin{equation}\label{local-bound}
     E(A,V,\alpha) \geq \inf_{y\in\R^3}\left(\inf_{\psi\in \QQ_L,\ \|\psi\|=1}\sprod{\psi}{H_{\Lambda,y}\psi}\right)-\Delta E-\frac{1}{2},
\end{equation}
where  $H_{\Lambda,y}$ is defined in terms of the shifted potentials 
$A_y(x) = A(x+y)$ and $V_y(x)=V(x+y)$, $\QQ_L:=(L^2(C_L)\otimes\FF)\cap\QQ$, and $C_L=\supp(\ph)\subset\R^3$ is the cube of side length $L$ centered at the origin.


The next step is the passage to \emph{block modes}. For given $P>0$ and $n\in\Z^3$ we define
\begin{align*}
    B(n) &:= \{k\in B_\Lambda \mid |k_i-n_iP|\leq P/2\},\\ 
    \Lambda_P &:= \{n\in\Z^3 \mid B(n)\neq \emptyset\}.
\end{align*}
In each set $B(n)$ we pick a point $k_n$, to be specified later, and we define
block annihilation and creation operators $a_n$ and $a_n^{*}$ by
$$
       a_n:= \frac{1}{M_n} \int_{B(n)}\frac{dk}{|k|}a(k),\qquad M_n=\left(\int_{B(n)} \frac{dk}{|k|^2}\right)^{1/2}.
$$
For given $\delta>0$ we define the block Hamiltonian
\begin{align*}
      H_{\Lambda,y}^{block} :=\ &\beta D_{A_y}^2 +V_y+ (1-\delta)\sum_{n\in\Lambda_P}a_n^*a_n\\ 
      &+ \frac{\sqrt{\alpha}}{\sqrt{2}\pi} \sum_{n\in\Lambda_P} M_n\big(e^{ik_nx}a_n + e^{-ik_nx}a^{*}_n \big),
\end{align*}
and we set $H_{\Lambda}^{block}:=H_{\Lambda,0}^{block}$. The reason for introducing block modes is well explained in
\cite{LT1997} and related to \eqref{eq:a*a}.


\begin{lemma}\label{lm:block}
In the sense of quadratic forms in $\QQ_L$, for all $(k_n)$,
$$
      H_{\Lambda,y} \geq H_{\Lambda,y}^{block} - \frac{9\alpha P^2L^2\Lambda}{2\pi\delta}.
$$
\end{lemma}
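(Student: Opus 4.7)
The plan is to bound $H_{\Lambda,y}-H_{\Lambda,y}^{block}$ from below as a quadratic form on $\QQ_L$ by splitting the difference into its number-operator part and its electron-phonon part (the electronic piece $\beta D_{A_y}^2+V_y$ is identical in both Hamiltonians). I write $N_{B_\Lambda}=\delta N_{B_\Lambda}+(1-\delta)N_{B_\Lambda}$, reserving the $\delta N_{B_\Lambda}$ to absorb the interaction error. For the remaining $(1-\delta)N_{B_\Lambda}$, setting $f_n(k):=\mathbf{1}_{B(n)}(k)/(M_n|k|)$ gives $\|f_n\|_{L^2}=1$ and $a_n=a(f_n)$, so the elementary Cauchy--Schwarz estimate $\|a_n\psi\|^2\leq \|f_n\|_{L^2}^2\int_{B(n)}\|a(k)\psi\|^2\,dk=\sprod{\psi}{N_{B(n)}\psi}$, summed over $n\in\Lambda_P$, yields $\sum_n a_n^*a_n\leq N_{B_\Lambda}$. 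Thus $(1-\delta)N_{B_\Lambda}\geq(1-\delta)\sum_n a_n^*a_n$, and it only remains to show that the difference between the true and the block interaction is bounded below by $-\delta\,N_{B_\Lambda}-\frac{9\alpha P^2L^2\Lambda}{2\pi\delta}$.

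For the interaction difference, the identities $M_n e^{\pm ik_n x}a_n^{(*)}=\int_{B(n)}\tfrac{dk}{|k|}\,e^{\pm ik_n x}a^{(*)}(k)$ allow me to write it as
$$
   \frac{\sqrt\alpha}{\sqrt 2\,\pi}\sum_n\int_{B(n)}\frac{dk}{|k|}
      \Big[(e^{ikx}-e^{ik_n x})a(k)+\text{h.c.}\Big].
$$
For each $k$, the elementary bound $\pm 2\,\mathrm{Re}\sprod{u}{v}\leq t^{-1}\|u\|^2+t\|v\|^2$, applied with $u=(e^{-ikx}-e^{-ik_n x})\psi$ and $v=a(k)\psi$ and with the $k$-dependent choice $t(k):=\delta\sqrt 2\,\pi|k|/\sqrt\alpha$, gives upon integration against the coupling weight $\tfrac{\sqrt\alpha}{\sqrt 2\,\pi|k|}$ the estimate
$$
   \pm\,\sprod{\psi}{(\text{interaction difference})\psi}
   \leq \delta\,\sprod{\psi}{N_{B_\Lambda}\psi}
   +\frac{\alpha}{2\pi^2\delta}\int_{B_\Lambda}\frac{\|(e^{ikx}-e^{ik_n x})\psi\|^2}{|k|^2}\,dk.
$$
The choice of $t(k)$ is engineered so that $\tfrac{\sqrt\alpha}{\sqrt 2\,\pi|k|}\cdot t(k)=\delta$ is independent of $k$, producing exactly the $\delta N_{B_\Lambda}$ kept in reserve in the first step.

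It remains only to bound the remainder integral. Since $\psi\in\QQ_L$ is supported in $C_L$ we have $|x|\leq L\sqrt 3/2$ on $\supp\psi$, and for $k\in B(n)$ the trivial bound $|k-k_n|\leq P\sqrt 3$ holds. The pointwise estimate $|e^{ikx}-e^{ik_n x}|\leq|k-k_n||x|$ therefore gives $\|(e^{ikx}-e^{ik_n x})\psi\|^2\leq\tfrac{9P^2L^2}{4}\|\psi\|^2$, uniformly in $k$ and $k_n$. Combined with the elementary computation $\int_{B_\Lambda}|k|^{-2}\,dk=4\pi\Lambda$, this produces precisely $\frac{9\alpha P^2L^2\Lambda}{2\pi\delta}\|\psi\|^2$, which is the claimed constant. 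The only delicate point is the $k$-dependent choice of $t$: it must scale like $|k|$ so that the $|k|^{-1}$ singularity of the coupling yields a clean (constant in $k$) multiple of the number operator on the absorbed side, while leaving an integrable $|k|^{-2}$ weight in the remainder; this is what makes the cutoff $\Lambda$ (rather than $\Lambda^2$ or worse) appear in the bound.
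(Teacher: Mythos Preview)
Your proof is correct and follows essentially the same route as the paper. The paper phrases the key inequality as a ``completion of squares w.r.t.\ $a(k)$ and $a^*(k)$'' on each block $B(n)$, while you phrase it as Young's inequality with a $k$-dependent parameter $t(k)=\delta\sqrt{2}\,\pi|k|/\sqrt{\alpha}$; these are two ways of writing the same estimate, and both lead to the identical remainder $\frac{\alpha}{2\pi^2\delta}\int_{B_\Lambda}|k|^{-2}|e^{ikx}-e^{ik_nx}|^2\,dk$ and the same pointwise bound $|e^{ikx}-e^{ik_nx}|\leq\tfrac{3}{2}PL$ on $C_L$.
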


\begin{proof}
For each $n\in \Lambda_P$, by a completion of squares w.r.t. $a(k)$ and $a^*(k)$ we find, 
in the sense of quadratic forms in $\QQ_L$,
\begin{eqnarray}
\lefteqn{ \delta N_{B(n)} + \frac{\sqrt{\alpha}}{\sqrt{2}\pi} \int_{B(n)}\frac{dk}{|k|}\big(e^{ikx}a(k) + e^{-ikx}a^{*}(k)\big) } \nonumber\\
    & \geq& \frac{\sqrt{\alpha}}{\sqrt{2}\pi} \int_{B(n)}\frac{dk}{|k|}\big(e^{ik_n x}a(k) 
+ e^{-ik_n x}a^{*}(k)\big) -\frac{\alpha}{2\pi^2\delta} \int_{B(n)} \frac{dk}{|k|^2}|e^{ikx}-e^{ik_n x}|^2\nonumber\\
    & \geq&  \frac{\sqrt{\alpha}}{\sqrt{2}\pi} M_n \big(e^{ik_nx}a_n + e^{-ik_nx}a^{*}_n \big) - \frac{\alpha}{2\pi^2\delta}(\frac{3}{2}PL)^2  \int_{B(n)} \frac{dk}{|k|^2}, 
\label{c-squares}
\end{eqnarray}
where we used the definition of $a_n$ and that
$$
       |e^{ikx}-e^{ik_n x}| \leq \frac{3}{2}PL,\quad \text{for}\ x\in C_L,\ k\in B(n).
$$
After summing \eqref{c-squares} with respect to $n\in\Lambda_P$, 
the lemma follows from $\int_{B_{\Lambda}}|k|^{-2}dk=4\pi \Lambda$ and from
$a^{*}_na_n \leq N_{B(n)}$.
\end{proof}

We now use Lemma~\ref{lm:block} to bound \eqref{local-bound} from below and then we replace $\QQ_L$ by $\QQ$.
This leads to 
\begin{equation}\label{block-bound}
    E(A,V,\alpha) \geq \inf_{\psi\in \QQ, \|\psi\|=1}\sup_{k_n}\sprod{\psi}{H_{\Lambda,y}^{block}\psi} - \frac{9\alpha P^2L^2\Lambda}{2\pi\delta} - \Delta E-\frac{1}{2}.
\end{equation}
Recall that $L$ depends on $\Delta E$. It remains to compare $\sprod{\psi}{H_{\Lambda,y}^{block}\psi}$ with the minimum of the Pekar functional. This will be done 
in the proof of the following lemma using coherent states.


\begin{lemma}\label{block-lm}
Let $\mu = \alpha \beta^{-1}(1-\delta)^{-1}$. Then for every normalized $\psi\in \QQ$ and every $y\in\R^3$,
$$
       \sup_{k_n}\sprod{\psi}{H_{\Lambda,y}^{block}\psi} \geq \beta E_P(A,V,\mu) - |\Lambda_P|.
$$
\end{lemma}

\begin{proof}
Since $E_P(A_y,V_y,\mu)$ is independent of $y$ it suffices to prove the asserted inequality without the 
$y$-shift in the block Hamiltonian. Let $M=\text{span}\{|\cdot|^{-1}\chi_{B(n)} \mid n\in \Lambda_P\}$, which is a finite dimensional subspace 
of $L^2(\R^3)$. From $L^2(\R^3) = M\oplus M^{\perp}$ it follows that $\FF$ is isomorphic to $\FF(M)\otimes \FF(M^{\perp})$ with the isomorphism given by
\begin{eqnarray*}
     \Omega &\mapsto & \Omega\otimes\Omega\\
     a^*(h) &\mapsto& a^*(h_1) \otimes 1 + 1\otimes a^*(h_2)
\end{eqnarray*}
where $h_1$ and $h_2$ are the orthogonal projections of $h$ onto $M$ and $M^{\perp}$ respectively. 
Here $\Omega$ denotes the normalized vacuum in any Fock space. Note that 
\begin{equation}
    \FF(M) = \overline{\text{span}}\Big\{\prod_{n\in\Lambda_p} (a^{*}_n)^{m_n}\Omega\Big| m_n\in\N\Big\}
\end{equation}
where $ \overline{\text{span}}$ denotes the closure of the span. With respect to the factorization $\HH = \HH_M \otimes \FF(M^{\perp})$ where 
$\HH_M=L^2(\R^3)\otimes\FF(M)$, the block Hamiltonian is of the form 
$H_\Lambda^{block} \otimes 1$. To bound $H_\Lambda^{block} \otimes 1$ on $\HH_M \otimes \FF(M^{\perp})$   from below  we introduce 
coherent states $|z\rangle\in \FF(M)$ for given $z=(z_n)_{n\in\Lambda_P}$, $z_n\in\C$, by
$$
    |z\rangle:= \prod_{n\in \Lambda_P} e^{z_n a^{*}_n - \bar{z}_n a_n}\Omega.
$$
Clearly, $\sprod{z}{z}=1$ and it is easy to check that $a_n|z\rangle=z_n |z\rangle$. 
On $\FF(M)$, in the sense of weak integrals,
\begin{align}
    \int dz |z\rangle\langle z| &= 1,\nonumber\\
    \int dz (|z_n|^2-1)|z\rangle\langle z| &= a^{*}_n a_n, \label{eq:a*a}
\end{align}
where
$
        \int dz := \prod_{n\in \Lambda_P}\frac{1}{\pi}\int dx_ndy_n.
$
The second equation follows from $a^{*}_n a_n = a_n a^{*}_n-1$  and from the first one. Now suppose that $\psi\in\QQ$ and let 
$\psi_z(x) = \sprod{z}{\psi(x)}$. Then $\psi_z\in L^2(\R^3)\otimes \FF(M^{\perp})$ and 
$$
    \sprod{\psi}{H_\Lambda^{block}\psi} = \int dz \sprod{\psi_z}{(h_z\otimes 1)\psi_z}
$$
where $h_z$ denotes the Schr\"odinger operator in $L^2(\R^3)$ given by
\begin{equation*}
    h_z  = \beta D_A^2 + V+ (1-\delta)\sum_{n\in \Lambda_P} (|z_n|^2-1) \\
     +  \frac{\sqrt{\alpha}}{\sqrt{2}\pi} \sum_{n\in \Lambda_P} M_n\big(z_n e^{ik_nx} + \overline{z_n} e^{-ik_nx}\big).
\end{equation*}
Let $\widehat{\rho_z}(k):=\sprod{\psi_z}{e^{-ikx}\psi_z}$ be the Fourier transform of $\rho_z(x)=|\psi_z(x)|^2$.
By completion of the square w.r.to $z_n$ and $\overline{z_n}$ it follows that
\begin{eqnarray*}
    \sup_{k_n}\lefteqn{\int dz \sprod{\psi_z}{(h_z\otimes 1)\psi_z} }\\
    &\geq & \int dz \beta\|D_A\psi_z\|^2 +\sprod{\psi_z}{V\psi_z} - \frac{\alpha}{2\pi^2(1-\delta)} \int dz\int_{B_{\Lambda}}\frac{dk}{|k|^2} |\widehat{\rho_z}(k)|^2\frac{1}{\|\psi_z\|^2}
    -|\Lambda_P|\\
    &\geq &  \int dz\Big(\beta\|D_A\psi_z\|^2 + \sprod{\psi_z}{V\psi_z} - \frac{\alpha}{(1-\delta)\|\psi_z\|^2} \int\frac{\rho_z(x)\rho_z(y)}{|x-y|}\,dxdy\Big) - |\Lambda_P|.
\end{eqnarray*}
The integrand is readily recognized as 
$$
     \beta\|\psi_z\|^2 \EE_{\mu}(A,\beta^{-1}V,\psi_z/\|\psi_z\|),
$$ 
with coupling constant $\mu:=\alpha \beta^{-1}(1-\delta)^{-1}$. Its minimum is 
$$
        \beta\|\psi_z\|^2 E_P(A,\beta^{-1}V,\mu).
$$
Since $\int \|\psi_z\|^2dz=1$, the proof of the lemma is complete.
\end{proof}

\begin{proof}[\textbf{Proof of Proposition~\ref{the-lower-bound}}]
By \eqref{block-bound} and Lemma~\ref{block-lm} it follows that
$$
     E(A,V,\alpha) \geq \beta E_P(A,\beta^{-1}V,\mu) - |\Lambda_P| - \frac{9\alpha P^2L^2\Lambda}{2\pi\delta}-\Delta E-\frac{1}{2}
$$
where $\beta=1-\frac{8\alpha}{\pi\Lambda}$, $\mu=\alpha \beta^{-1}(1-\delta)^{-1}$ 
and $L^2=\pi^23\beta/\Delta E$. $\Lambda,\delta,P$ and $\Delta E$ are free
parameters. We choose $\Lambda=\frac{8}{\pi}\alpha^{6/5}$, $\delta=\alpha^{-1/5}$, $P=\alpha^{3/5}$ and $\Delta E=\alpha^{9/5}$. Then 
$\beta=1-\delta$ and hence the proposition follows.
\end{proof}

\section{The Strong Coupling Limit}

Equipped with Proposition~\ref{the-lower-bound} we can turn to the
proofs of the results described in the introduction in the more precise forms of 
Theorems~\ref{limit-thm} and \ref{weak-field-thm}, below.

\begin{theorem}\label{limit-thm}
Suppose the potentials $A$ and $V$ satisfy the assumptions of
Proposition~\ref{the-lower-bound}, $A_{\alpha}(x) := \alpha A(\alpha x)$ and
$V_{\alpha}(x):=\alpha^2 V(\alpha x)$. Then there exists a constant
$C=C(A,V)$ such that for $\alpha>0$ large enough,
\begin{equation*}
      \alpha^2 E_P(A,V) \geq E(A_\alpha,V_\alpha,\alpha) \geq \alpha^2 E_P(A,V) - C\alpha^{9/5}.
\end{equation*}
\end{theorem}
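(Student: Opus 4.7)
The plan is to prove the upper and lower bounds of Theorem~\ref{limit-thm} separately. The upper bound is a standard Pekar coherent-state argument; the lower bound follows from Proposition~\ref{the-lower-bound} by a scaling identity and a short perturbative comparison.

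For the upper bound $E(A_\alpha,V_\alpha,\alpha)\le\alpha^2 E_P(A,V)$ I would take a normalized $\ph\in C_0^\infty(\R^3)$, set $\rho:=|\ph|^2$, and use the trial state $\Psi=\ph\otimes e^{a^{*}(f)-a(f)}\Omega$ with $f(k):=-\frac{\sqrt{\alpha}}{\sqrt{2}\pi|k|}\hat\rho(k)$. Since $a(k)$ acts on the coherent state by multiplication with $f(k)$, a direct computation gives $N(\Psi)=\|f\|^2$ and a linear expression for $\sqrt{\alpha}W(\Psi)$; inserting the explicit $f$ collapses $N(\Psi)+\sqrt{\alpha}W(\Psi)$ to $-\alpha\iint\rho(x)\rho(y)/|x-y|\,dxdy$, whence $H(\Psi)=\EE_\alpha(A_\alpha,V_\alpha,\ph)$. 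Taking the infimum over $\ph$ and invoking the scaling identity \eqref{scaling-pekar} yields the upper bound. The strict requirement $\Psi\in\QQ$ is met by first truncating the coherent state to a finite phonon number and passing to the limit, which is legitimate because the form is closable by Lemma~\ref{semibound}.

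For the lower bound I would apply Proposition~\ref{the-lower-bound} to the pair $(A_\alpha,V_\alpha)$ to obtain
\begin{equation*}
    E(A_\alpha,V_\alpha,\alpha)\ge \beta\,E_P(A_\alpha,\beta^{-1}V_\alpha,\alpha\beta^{-2})-C\alpha^{9/5},\qquad \beta=1-\alpha^{-1/5}.
\end{equation*}
The dilation $\ph\mapsto\lambda^{3/2}\ph(\lambda\cdot)$ gives the general scaling $E_P(A_\lambda,V_\lambda,\mu)=\lambda^2 E_P(A,V,\mu/\lambda)$, of which \eqref{scaling-pekar} is the case $\mu=\lambda$. Specializing to $\lambda=\alpha$, $\mu=\alpha\beta^{-2}$ (and using $(\beta^{-1}V)_\alpha=\beta^{-1}V_\alpha$) converts the right-hand side above into $\alpha^2\beta\,E_P(A,\beta^{-1}V,\beta^{-2})-C\alpha^{9/5}$, so it suffices to prove
\begin{equation*}
    \beta\,E_P(A,\beta^{-1}V,\beta^{-2})\ge E_P(A,V)-C\alpha^{-1/5}.
\end{equation*}

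For this final comparison I would use the elementary identity, valid for every normalized $\ph$,
\begin{equation*}
   \beta\,\EE_{\beta^{-2}}(A,\beta^{-1}V,\ph)=\EE(A,V,\ph)-(1-\beta)\Big[\|D_A\ph\|^2+\beta^{-1}\iint\tfrac{|\ph(x)|^2|\ph(y)|^2}{|x-y|}\,dxdy\Big],
\end{equation*}
applied to a near-minimizer $\tilde\ph$ of $\EE_{\beta^{-2}}(A,\beta^{-1}V,\cdot)$. The main obstacle is to bound the bracketed quantity uniformly in $\beta$ near~$1$. I would do this by combining three standard inputs: (i)~a fixed smooth normalized trial function provides the a priori upper bound $E_P(A,\beta^{-1}V,\beta^{-2})\le M$; (ii)~the form bound \eqref{V-Delta-bound} together with the diamagnetic inequality gives $|\sprod{\tilde\ph}{V\tilde\ph}|\le \eps\|D_A\tilde\ph\|^2+C_\eps$; and (iii)~the Hardy--Littlewood--Sobolev inequality together with the diamagnetic inequality yields $\iint|\tilde\ph(x)|^2|\tilde\ph(y)|^2/|x-y|\,dxdy\le c\|D_A\tilde\ph\|$ for $\|\tilde\ph\|=1$. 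Inserting (ii) and (iii) into $\EE_{\beta^{-2}}(A,\beta^{-1}V,\tilde\ph)\le M+1$ produces a quadratic-minus-linear inequality that forces $\|D_A\tilde\ph\|\le C$ uniformly for $\beta\in[\tfrac{1}{2},1]$, delivering the claim with an error of size $O(\alpha^{9/5})$.
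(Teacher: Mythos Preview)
Your proposal is correct. The upper bound via product coherent states and the reduction of the lower bound to $\alpha^2\beta\,E_P(A,\beta^{-1}V,\beta^{-2})-C\alpha^{9/5}$ via Proposition~\ref{the-lower-bound} and scaling are exactly what the paper does.

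The only genuine difference is how you handle the final comparison
\[
   \beta\,E_P(A,\beta^{-1}V,\beta^{-2})\ \ge\ E_P(A,V)-C\alpha^{-1/5}.
\]
You argue by writing down the algebraic identity relating $\beta\,\EE_{\beta^{-2}}(A,\beta^{-1}V,\ph)$ and $\EE(A,V,\ph)$, choosing a near-minimizer $\tilde\ph$, and then establishing a uniform bound on $\|D_A\tilde\ph\|$ (and hence on the Coulomb term) from the a~priori estimate $\EE_{\beta^{-2}}(A,\beta^{-1}V,\tilde\ph)\le M+1$ together with \eqref{V-Delta-bound}, the diamagnetic inequality, and a Hardy-type bound. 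This works, and makes the constant $C=C(A,V)$ rather explicit.

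The paper instead observes that for each fixed $\ph$ the map $\lambda\mapsto \EE_{\lambda^2}(A,\lambda V,\ph)$ is concave in $\lambda$ (quadratic with nonpositive leading coefficient), so the infimum $\lambda\mapsto E_P(A,\lambda V,\lambda^2)$ is concave as well; hence it has finite one-sided derivatives at $\lambda=1$, which immediately gives a local Lipschitz estimate and thus $E_P(A,\beta^{-1}V,\beta^{-2})\ge E_P(A,V)-O(\beta^{-1}-1)=E_P(A,V)-O(\alpha^{-1/5})$. This is shorter and avoids the a~priori kinetic-energy bound, at the price of a slightly less explicit constant. Your direct argument, by contrast, isolates exactly which structural inputs (the relative form bound on $V$, the diamagnetic inequality, and a Hardy/HLS bound on the Coulomb term) are responsible for the Lipschitz behavior.
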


\begin{proof}
The first inequality follows from the well-known $E_P\geq E$, see the
proof of \eqref{inequ:pt-froe}, and from the scaling
property \eqref{scaling-pekar} of $E_P$. Using
Proposition~\ref{the-lower-bound} and \eqref{scaling-pekar}, we
see that 
\begin{equation}\label{lt1}
     E(A_{\alpha},V_\alpha,\alpha) \geq \alpha^2 \beta
     E_P(A,\beta^{-1}V,\beta^{-2}) - O(\alpha^{9/5})
\end{equation}
where $\beta=1-\alpha^{-1/5}$ and where the function $\lambda\mapsto
E_P(A,\lambda V,\lambda^2)$, as an infimum of concave functions, is concave. Therefore it has one-sided derivatives, which implies that
\begin{equation}\label{lt2}
   E_P(A,\beta^{-1}V,\beta^{-2}) \geq E_P(A,V) -O(\alpha^{-1/5}).
\end{equation}  
Combining \eqref{lt1} and \eqref{lt2} the second inequality from
Theorem~\ref{limit-thm} follows.
\end{proof}




\begin{theorem}\label{weak-field-thm}
\text{}
\begin{itemize}
\item[(a)] If $A\in L^3_{\rm loc}(\R^3)$ and $V\in
  L^{3/2}_{\rm loc}(\R^3)$ with \eqref{V-Delta-bound}, then
  $\alpha^{-2}E(A,V,\alpha)\to E_P$ as $\alpha\to\infty$.
\item[(b)] If $A=(B\wedge x)/2$ and $V\in L^{5/3}(\R^3)+L^{\infty}(\R^3)$, then 
$$
      \frac{E(A,V,\alpha)}{\alpha^2} = E_P + O(\alpha^{-1/5}),\qquad (\alpha\to \infty).
$$
\end{itemize}
\end{theorem}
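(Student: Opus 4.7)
The plan is to combine the Pekar-ansatz upper bound $E(A,V,\alpha)\leq E_P(A,V,\alpha)$ (from the standard trial state $\psi=\ph\otimes\chi(\ph)$ with $\chi(\ph)$ a coherent state in $\FF$) with Proposition~\ref{the-lower-bound}, and to reduce both parts to showing that the generalized Pekar energy with small potentials converges to $E_P$. The dilation $\ph(x)\mapsto\alpha^{3/2}\ph(\alpha x)$ underlying \eqref{scaling-pekar} gives $E_P(A,V,\alpha)=\alpha^2 E_P(\hat A,\hat V)$ with $\hat A(y):=\alpha^{-1}A(y/\alpha)$ and $\hat V(y):=\alpha^{-2}V(y/\alpha)$.

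For the upper bound, test $E_P(\hat A,\hat V)$ against the real, positive minimizer $\psi_0$ of the field-free Pekar functional; since the magnetic cross term vanishes, $E_P(\hat A,\hat V)\leq E_P+\int|\hat A|^2|\psi_0|^2\,dx+\int\hat V|\psi_0|^2\,dx$. In case (b), $|\hat A(y)|^2\leq|B|^2|y|^2/(4\alpha^4)$ gives an $O(\alpha^{-4})$ magnetic contribution via the exponential decay of $\psi_0$; splitting $V=V_1+V_2\in L^{5/3}+L^\infty$ and computing $\|\hat V_1\|_{5/3}=\alpha^{-1/5}\|V_1\|_{5/3}$, H\"older with $\|\psi_0^2\|_{5/2}<\infty$ yields $\bigl|\int\hat V|\psi_0|^2\,dx\bigr|=O(\alpha^{-1/5})$. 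In case (a), the same terms tend to zero using the local $L^3$ and $L^{3/2}$ integrability of $A,V$, exponential decay of $\psi_0$, and \eqref{V-Delta-bound} applied to $\psi_0$. For the lower bound, apply Proposition~\ref{the-lower-bound} directly to $(A,V,\alpha)$, with its $O(\alpha^{9/5})$-error uniform in $A,V$, and rescale with $\lambda=\alpha\beta^{-2}$ to obtain $\beta E_P(A,\beta^{-1}V,\alpha\beta^{-2})=\alpha^2\beta^{-3}E_P(\tilde A,\tilde V)$, where in case (b) $\tilde A$ is a constant-magnetic-field potential of magnitude $|B|\beta^4/\alpha^2$ and $\tilde V(y)=(\beta^3/\alpha^2)V(y\beta^2/\alpha)$. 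It then suffices to show $E_P(\tilde A,\tilde V)\geq E_P-O(\alpha^{-1/5})$: with $u=|\psi|$, the diamagnetic inequality, H\"older, the Sobolev bound $\|u\|_5^2\leq C\|\nabla u\|^{8/5}\|u\|^{2/5}$, and Young give $\EE(\tilde A,\tilde V,\psi)\geq(1-\eps)\|\nabla u\|^2-D(u^2,u^2)-C\eps^{-4}\alpha^{-1}-O(\alpha^{-2})$. The Pekar scaling identity
$$\inf_{\|u\|=1}\bigl[(1-\eps)\|\nabla u\|^2-D(u^2,u^2)\bigr]=\frac{E_P}{1-\eps},$$
obtained by optimizing $u\mapsto\lambda^{3/2}u(\lambda\cdot)$, converts the $\eps$-loss into a shift $-|E_P|\eps+O(\eps^2)$ of $E_P$, and optimizing $\eps\sim\alpha^{-1/5}$ closes the estimate. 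Since $\alpha^2\beta^{-3}=\alpha^2+O(\alpha^{9/5})$, this yields $E(A,V,\alpha)\geq\alpha^2 E_P-O(\alpha^{9/5})$. For case (a), the analogous argument with \eqref{V-Delta-bound} in place of the explicit H\"older estimate, plus an a priori bound on $\|\nabla|\psi|\|$ for almost-minimizing $\psi$ (from $D(u^2,u^2)\leq C\|\nabla u\|\|u\|^3$), delivers convergence without rate.

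The main obstacle is the lower bound $E_P(\tilde A,\tilde V)\geq E_P-O(\alpha^{-1/5})$: one must bound the potential energy of an arbitrary unit test function by the kinetic energy with a coefficient sharp enough that the Pekar scaling identity converts the loss into an $O(\alpha^{-1/5})$ shift of $E_P$, matching the uniform $O(\alpha^{9/5})$ remainder from Proposition~\ref{the-lower-bound}.
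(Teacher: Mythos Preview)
Your overall strategy coincides with the paper's: sandwich $\alpha^{-2}E(A,V,\alpha)$ between the rescaled Pekar upper bound $E_P(\hat A,\hat V)$ and the rescaled lower bound from Proposition~\ref{the-lower-bound}, and then show that both converge to $E_P$. For part (b) your argument is essentially identical to the paper's: the diamagnetic inequality plus the H\"older--Sobolev bound $\|u\|_5^2\leq C\|\nabla u\|^{8/5}\|u\|^{2/5}$ for the lower bound, and testing against the real Pekar minimizer (so that the magnetic cross term vanishes) for the upper bound. The paper's bookkeeping is marginally simpler---it uses $|\sprod{\ph}{\hat V\ph}|\leq C\alpha^{-1/5}(\|\ph\|^2+\|\nabla|\ph|\|^2)$ directly and absorbs the gradient term into the kinetic energy, rather than introducing and optimizing an auxiliary $\eps$---but the content is the same.

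There is, however, a genuine gap in your upper bound for part (a). You test $E_P(\hat A,\hat V)$ against the Pekar minimizer $\psi_0$ and claim that $\int|\hat A|^2|\psi_0|^2\,dx\to 0$ follows from $A\in L^3_{\rm loc}$ together with the exponential decay of $\psi_0$. This step fails: after the substitution $z=y/\alpha$ the integral becomes $\alpha\int|A(z)|^2|\psi_0(\alpha z)|^2\,dz$, and since $L^3_{\rm loc}$ imposes no growth restriction at infinity (take for instance $A(z)=e^{|z|^2}\,e_1$), the integral need not even be finite. The exponential decay of $\psi_0(\alpha\cdot)$ cannot compensate for arbitrary growth of $A$. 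The paper avoids this by testing against arbitrary $\ph\in C_0^\infty(\R^3)$ instead of $\psi_0$: for such $\ph$ one has $\|\hat A\ph\|\to 0$ directly from $\hat A\to 0$ in $L^2_{\rm loc}$, and taking the infimum over normalized $\ph$ then recovers $E_P$. Your treatment of the $\hat V$-term via \eqref{V-Delta-bound} is fine, since that bound does survive the rescaling. Your lower bound for (a) is also correct, though the a~priori kinetic-energy bound you invoke is unnecessary: as in the paper one can simply absorb the $\eps\|\nabla|\ph|\|^2$ from \eqref{V-Delta-bound} into the kinetic term, apply the Pekar scaling identity, and let $\eps\to 0$ after $\alpha\to\infty$.
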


The fact that non-scaled fields $A,V$ should become negligible in the
limit $\alpha\to\infty$ is seen as follows: by Proposition~\ref{the-lower-bound} and by \eqref{scaling-pekar},
$\alpha^{-2} E(A,V,\alpha)$ is bounded from above and from below by 
\begin{equation}
       E_P(A_{\alpha^{-1}}, V_{\alpha^{-1}}) \geq \frac{E(A,V,\alpha)}{\alpha^2} \geq
       \beta E_P(A_{\alpha^{-1}},\beta^{-1}
       V_{\alpha^{-1}},\beta^{-2}) -O(\alpha^{-1/5}),
\end{equation}
where $A_{\alpha^{-1}}(x) = \alpha^{-1}A(x/\alpha)$ and $V_{\alpha^{-1}}(x) = \alpha^{-2}V(x/\alpha)$.
In the limit $\alpha\to\infty$ these fields are vanishing in the sense of the 
following lemma. The theorem will thus follow from parts (b) and (c)
of Lemma~\ref{lm:weak-field2} below.
As a preparation we need:

\begin{lemma}\label{lm:weakfieldlimit}
(i) Suppose $A\in L^3_{\rm loc}(\R^3)$ and $V\in L^{3/2}_{\rm loc}(\R^3)$. Then 
\begin{align*}
A_{\alpha^{-1}} &\to 0\quad (\alpha\to \infty)\quad \text{in } L^2_{\rm loc}(\R^3),\\
V_{\alpha^{-1}} &\to 0\quad (\alpha\to \infty)\quad \text{in } L^1_{\rm loc}(\R^3).
\end{align*}
(ii) If $V=V_1+ V_2\in L^{5/3}(\R^3)+L^{\infty}(\R^3)$, then for all $\ph\in H^1(\R^3)$ 
\begin{equation}\label{hardy-sobolev}
    |\sprod{\ph}{V\ph}| \leq C\|V_1\|_{5/3} \|\ph\|_{H^1}^2 + \|V_2\|_{\infty} \|\ph\|^2. 
\end{equation}
In particular, $V$ is infinitesimally form bounded w.r.to $-\Delta$.
\end{lemma}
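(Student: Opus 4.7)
My plan for part (i) is to combine the scaling change of variables with a local H\"older estimate. For any compact set $K\subset\R^3$,
\[
\int_K |A_{\alpha^{-1}}(x)|^2\,dx = \alpha^{-2}\int_K |A(x/\alpha)|^2\,dx = \alpha\int_{K/\alpha}|A(y)|^2\,dy \leq |K|^{1/3}\,\|A\|_{L^3(K/\alpha)}^2,
\]
where I used H\"older with exponents $3/2$ and $3$ to split $|A|^2\cdot 1$ and the fact that $|K/\alpha|^{1/3}=\alpha^{-1}|K|^{1/3}$ to cancel the prefactor $\alpha$. Since $K\subset B_R$ for some $R$, the sets $K/\alpha$ shrink to $\{0\}$ as $\alpha\to\infty$, so $A\in L^3_{\rm loc}$ and dominated convergence (with majorant $|A|^3\chi_{B_R}$) force $\|A\|_{L^3(K/\alpha)}\to 0$. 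An entirely parallel computation, with the same H\"older exponents applied to $|V|\cdot 1$, establishes the corresponding statement for $V_{\alpha^{-1}}$ using $V\in L^{3/2}_{\rm loc}$.

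For part (ii) the key tool is the Sobolev embedding $H^1(\R^3)\hookrightarrow L^6(\R^3)$, which gives $\|\ph\|_6\leq C_S\|\ph\|_{H^1}$. Interpolating with the identity $\tfrac{1}{5}=\tfrac{1}{10}\cdot\tfrac{1}{2}+\tfrac{9}{10}\cdot\tfrac{1}{6}$ yields
\[
\|\ph\|_5 \leq \|\ph\|_2^{1/10}\,\|\ph\|_6^{9/10} \leq C_S^{9/10}\,\|\ph\|_{H^1},
\]
so that $\|\ph\|_5^2\leq C\|\ph\|_{H^1}^2$. A single H\"older step with dual exponents $5/3$ and $5/2$ then bounds $\int |V_1||\ph|^2\,dx \leq \|V_1\|_{5/3}\|\ph\|_5^2 \leq C\|V_1\|_{5/3}\|\ph\|_{H^1}^2$, while the $V_2$-piece contributes $\|V_2\|_\infty\|\ph\|^2$ trivially. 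Adding the two gives \eqref{hardy-sobolev}.

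To pass from \eqref{hardy-sobolev} to infinitesimal form boundedness with respect to $-\Delta$, I would use a truncation trick: given $\eps>0$, split $V_1 = V_1\chi_{\{|V_1|>M\}} + V_1\chi_{\{|V_1|\leq M\}}$. Dominated convergence shows $\|V_1\chi_{\{|V_1|>M\}}\|_{5/3}\to 0$ as $M\to\infty$, while the second summand is bounded and can be absorbed into the $L^\infty$-part of the decomposition. Choosing $M$ large enough that $C\|V_1\chi_{\{|V_1|>M\}}\|_{5/3}<\eps$ and then applying \eqref{hardy-sobolev} to this refined splitting, with $\|\ph\|_{H^1}^2=\|\ph\|^2+\|\nabla\ph\|^2$, produces an estimate of the form $|\sprod{\ph}{V\ph}|\leq \eps\|\nabla\ph\|^2 + C_\eps\|\ph\|^2$. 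I do not anticipate any real obstruction here; the only delicate point is getting the interpolation arithmetic right so that the Sobolev endpoint $L^6$ is genuinely exploited, which is precisely what makes the assumption $V_1\in L^{5/3}$ the right one.
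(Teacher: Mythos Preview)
Your proof is correct and follows essentially the same route as the paper's: for (i), the scaling change of variables plus H\"older with exponents $3/2$ and $3$ and dominated convergence on the shrinking sets is exactly what the paper does; for (ii), both you and the paper combine H\"older $|\sprod{\ph}{V_1\ph}|\le\|V_1\|_{5/3}\|\ph\|_5^2$ with the interpolation $\|\ph\|_5\le\|\ph\|_2^{1/10}\|\ph\|_6^{9/10}$ and Sobolev, and your truncation argument for the infinitesimal form bound is precisely the content of the paper's remark that ``the norm of the $L^{5/3}$-part of $V$ can be chosen arbitrarily small.'' The only cosmetic difference is that the paper inserts a Young-inequality step $\|\ph\|_5^2\le\tfrac{1}{10}\|\ph\|^2+\tfrac{9}{10}\|\ph\|_6^2$ before applying Sobolev, whereas you bound $\|\ph\|_2$ and $\|\ph\|_6$ directly by $\|\ph\|_{H^1}$; both reach the same conclusion.
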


\begin{proof}
(i) Let $\Omega\subset\R^3$ be compact. By Cauchy-Schwarz,
\begin{align*}
\int_{\Omega} |A_{\alpha^{-1}}(x)|^2dx &= \alpha\int_{\alpha^{-1}\Omega} |A(x)|^2dx\\
&\le \left(\int |A(x)|^3\chi_{\alpha^{-1}\Omega} (x) dx\right)^{2/3} |\Omega|^{1/3}\to 0\quad (\alpha\to\infty).
\end{align*}
The second statement of (i) is proved similarly.

In statement (ii) the contribution due to $V_2$ is obvious. Let us assume that
$V=V_1\in L^{5/3}(\R^3)$. By H\"older`s inequality $ |\sprod{\ph}{V\ph}|
\leq \|V\|_{5/3} \|\ph\|_5^2$ and
\begin{equation}
     \int |\ph|^5\,dx \leq \|\ph\|^{1/2}\left(\int |\ph|^6\,dx\right)^{3/4}.
\end{equation}
Using the general inequality $ab\leq p^{-1}a^p+q^{-1}b^q$ with $p=10$ and $q=10/9$ we obtain
$$
    \|\ph\|_5^2 \leq \|\ph\|^{1/5} \|\ph\|_6^{9/5}\leq
    \frac{1}{10}\|\ph\|^2 + \frac{9}{10} \|\ph\|_6^2.
$$
Statement (ii) now follows from the Sobolev inequality $ \|\ph\|_6^2
\leq C\|\nabla\ph\|^2$.  The infinitesimal form bound follows from the
fact that the norm of the $L^{5/3}$-part of $V$ can be chosen arbitrarily small.
\end{proof}

\begin{lemma}\label{lm:weak-field2}
Let $A$, $V$ be real-valued potentials satisfying the hypothesis of
Lemma~\ref{lm:weakfieldlimit} (i), and suppose that \eqref{V-Delta-bound} holds. 
If $\lim_{\alpha\to\infty} \lambda (\alpha) = 1$, then
\begin{itemize}
\item[(a)] $\lim_{\alpha\to\infty} \EE_{\lambda^2} (A_{\alpha^{-1}},\lambda V_{\alpha^{-1}}, \ph) = \EE (\ph)$ for all $\ph\in C_0^{\infty} (\R^3)$,
\item[(b)] $\lim_{\alpha\to\infty} E_P(A_{\alpha^{-1}},\lambda V_{\alpha^{-1}}, \lambda ^2) = E_P$.
\end{itemize}
If $A=(B\wedge x)/2$, $V\in L^{5/3}(\R^3) +L^{\infty}(\R^3)$ and $\lim_{\alpha\to\infty} \lambda (\alpha) = 1$, then 
\begin{itemize}
\item[(c)] $E_P(A_{\alpha^{-1}},\lambda V_{\alpha^{-1}}, \lambda ^2)
  =\lambda^4 E_P + O(\alpha^{-1/5}),\quad (\alpha\to\infty)$.
\end{itemize}
\end{lemma}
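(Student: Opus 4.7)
The plan is to treat the three parts in sequence; each reduces, via the diamagnetic inequality and the vanishing of $A_{\alpha^{-1}}, V_{\alpha^{-1}}$ from Lemma~\ref{lm:weakfieldlimit}, to the field-free Pekar functional at coupling $\lambda^2$, whose minimum equals $\lambda^4 E_P$.

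For (a) I would expand $|D_{A_{\alpha^{-1}}}\ph|^2 = |\nabla\ph|^2 + |A_{\alpha^{-1}}|^2|\ph|^2 + 2\Re(\overline{\nabla\ph}\cdot(-iA_{\alpha^{-1}})\ph)$. Since $\ph\in C_0^{\infty}$ has fixed compact support $K$, Cauchy--Schwarz together with Lemma~\ref{lm:weakfieldlimit}(i) kills both $A$-dependent terms in the limit, and the same lemma yields $|\sprod{\ph}{V_{\alpha^{-1}}\ph}|\le \|\ph\|_{\infty}^2\|V_{\alpha^{-1}}\|_{L^1(K)}\to 0$; the Coulomb term depends on $\alpha$ only through $\lambda^2\to 1$.

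For (b), the upper bound follows from (a) by applying it to a trial state $\ph_{\eps}\in C_0^{\infty}(\R^3)$, $\|\ph_{\eps}\|=1$, with $\EE(\ph_{\eps})<E_P+\eps$, then letting $\eps\to 0$. For the lower bound, the diamagnetic inequality reduces matters to $E_P(0,\lambda V_{\alpha^{-1}},\lambda^2)$. A rescaling $\tilde\psi(y):=\alpha^{3/2}\psi(\alpha y)$ converts \eqref{V-Delta-bound} into
\begin{equation*}
   |\sprod{\psi}{V_{\alpha^{-1}}\psi}| \leq \eps\|\nabla\psi\|^2 + \alpha^{-2}C_{\eps}\|\psi\|^2,
\end{equation*}
uniformly in $\alpha$. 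Combined with the Hardy--Littlewood--Sobolev plus Gagliardo--Nirenberg estimate $D[\psi]\leq C\|\nabla\psi\|$ on normalized $\psi$, this furnishes a uniform bound on $\|\nabla\psi_{\alpha}\|$ for any sequence $\psi_{\alpha}$ of normalized near-minimizers, whence $|\sprod{\psi_{\alpha}}{V_{\alpha^{-1}}\psi_{\alpha}}|\to 0$ and $\EE_{\lambda^2}(0,\lambda V_{\alpha^{-1}},\psi_{\alpha}) \geq \lambda^4 E_P - o(1)\to E_P$.

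For (c), the homogeneity of $A(x)=(B\wedge x)/2$ gives $A_{\alpha^{-1}}=\alpha^{-2}A$. For the upper bound, I would let $\ph_P$ be a radial (real) minimizer of $\EE$ and choose the trial state $\ph(x):=\lambda^3\ph_P(\lambda^2 x)$, for which $\|\ph\|=1$ and $\EE_{\lambda^2}(0,0,\ph)=\lambda^4 E_P$; reality of $\ph$ kills the cross term in $\int|D_{A_{\alpha^{-1}}}\ph|^2$, and the remaining $\alpha^{-4}\int|A|^2|\ph|^2$ is $O(\alpha^{-4})$ by the exponential decay of $\ph_P$. Writing $V=V_1+V_2\in L^{5/3}+L^{\infty}$, the scaling identities $\|V_{\alpha^{-1},1}\|_{5/3}=\alpha^{-1/5}\|V_1\|_{5/3}$ and $\|V_{\alpha^{-1},2}\|_{\infty}=\alpha^{-2}\|V_2\|_{\infty}$, fed into \eqref{hardy-sobolev}, give $|\sprod{\ph}{V_{\alpha^{-1}}\ph}|=O(\alpha^{-1/5})$. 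The matching lower bound proceeds as in (b): diamagnetic reduces to the field-free Pekar functional with $V_{\alpha^{-1}}$-perturbation, and \eqref{hardy-sobolev} in its quantitative form $|\sprod{\psi}{V_{\alpha^{-1}}\psi}|\leq C\alpha^{-1/5}\|\psi\|_{H^1}^2+O(\alpha^{-2})$, together with a uniform $H^1$-bound on near-minimizers, produces an error $O(\alpha^{-1/5})$.

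The hard part will be establishing the quantitatively scaled form bounds on $V_{\alpha^{-1}}$ from the hypotheses on $V$, and the resulting uniform-in-$\alpha$ kinetic control on near-minimizers. Once those two ingredients are in place, the remaining estimates are routine applications of the diamagnetic and Hardy--Littlewood--Sobolev inequalities.
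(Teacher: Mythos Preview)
Your proposal is correct and follows the same overall strategy as the paper: part~(a) via the local convergence of Lemma~\ref{lm:weakfieldlimit}, upper bounds in (b) and (c) via trial states (in~(c) a scaled real Pekar minimizer with the cross term $\Rea\sprod{-i\nabla\ph}{A_{\alpha^{-1}}\ph}$ vanishing), and lower bounds via the diamagnetic inequality together with the scaled form bound $|\sprod{\ph}{V_{\alpha^{-1}}\ph}|\leq \eps\|\nabla|\ph|\|^2+\alpha^{-2}C_{\eps}$ (respectively $\leq C\alpha^{-1/5}(\|\ph\|^2+\|\nabla|\ph|\|^2)$ in~(c)).

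The one methodological difference is in how you close the lower bounds. You pass to near-minimizers $\psi_\alpha$, establish a uniform-in-$\alpha$ kinetic bound $\|\nabla\psi_\alpha\|\leq M$, and then conclude that the $V$-contribution is $o(1)$ (respectively $O(\alpha^{-1/5})$). The paper instead absorbs the error directly into the kinetic term: from
\[
\|\nabla|\ph|\|^2 + \lambda\sprod{\ph}{V_{\alpha^{-1}}\ph}\ \geq\ (1-\lambda\eps)\|\nabla|\ph|\|^2 - \lambda C_\eps\alpha^{-2}
\]
one recognizes $(1-\lambda\eps)\,\EE_{\lambda^2/(1-\lambda\eps)}(0,0,|\ph|)$ on the right, whose infimum over normalized $\ph$ is $\frac{\lambda^4}{1-\lambda\eps}E_P$ by the scaling $E_P(0,0,\mu)=\mu^2 E_P$; then $\alpha\to\infty$ followed by $\eps\to 0$ finishes~(b), and in~(c) the same absorption with $\eps$ replaced by $C\alpha^{-1/5}$ yields the rate directly. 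This sidesteps any discussion of near-minimizers or a~priori $H^1$ control, so the step you flag as the ``hard part'' (uniform kinetic control on near-minimizers) is in fact not needed. Your route works as well---the upper bound combined with $D[\psi]\leq C\|\nabla\psi\|$ does furnish the required a~priori bound---but it is a detour compared to the paper's one-line absorption.
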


\begin{proof}
(a) For $\ph\in C_0^{\infty}(\R^3)$, Lemma~\ref{lm:weakfieldlimit} implies that 
$\|A_{\alpha^{-1}}\ph\|\to 0$ and $\sprod{\ph}{V_{\alpha^{-1}}\ph } \to 0$ as $\alpha\to\infty$.
This proves (a).

(b) For any normalized $\ph\in C_0^{\infty}(\R^3)$, by (a),
\begin{align*}
   \limsup _{\alpha\to\infty}  E_P(A_{\alpha^{-1}},\lambda
   V_{\alpha^{-1}},\lambda^2) \leq  \limsup _{\alpha\to\infty}
   \EE_{\lambda^2} (A_{\alpha^{-1}},\lambda V_{\alpha^{-1}},\ph) = \EE(\ph).
\end{align*}
This implies that $\limsup _{\alpha\to\infty}  E_P(A,\lambda V, \lambda^2 \alpha) \alpha^{-2}\le E_P$.

For (b) it remains to prove that $\liminf_{\alpha\to\infty}
E_P(A_{\alpha^{-1}},\lambda V_{\alpha^{-1}},\lambda^2) \ge E_P$. 
By the hypothesis on $V$, for any $\eps>0$ and any normalized $\ph\in
C_0^{\infty}(\R^3)$,
\begin{equation}\label{vrelbesch}
    |\sprod{\ph}{V_{\alpha^{-1}}\ph}| \leq \eps\|\nabla|\ph|\|^2+ \frac{C_{\eps}}{ \alpha^2}.
\end{equation}
From \eqref{vrelbesch}, the diamagnetic inequality and
the scaling property of $E_P$, it follows that 
\begin{align}
\EE_{\lambda^2}(A_{\alpha^{-1}}, \lambda V_{\alpha^{-1}}, \ph ) &\ge
(1-\lambda\eps) E_P(0,0,\lambda^2/(1-\lambda\eps)) - \lambda\frac{C_{\eps}}{\alpha^2} \label{ineq:energy1}\\
&\geq \frac{\lambda^4}{(1-\lambda\eps)} E_P - \lambda\frac{C_\eps}{\alpha^2},
\end{align}
and hence that 
$$
      E_P(A_{\alpha^{-1}},\lambda V_{\alpha^{-1}},\lambda^2) \geq \frac{\lambda^4}{(1-\lambda\eps)} E_P - \lambda\frac{C_\eps}{\alpha^2}.
$$
Now letting first $\alpha\to\infty$ and then $\eps\to 0$, the desired lower bound is obtained.

The proof of the lower bound in (c) is similar to the proof of the
lower bound in (b), the main difference being that we now have \eqref{hardy-sobolev}
from Lemma~\ref{lm:weakfieldlimit}, which implies that 
\begin{equation}\label{V-1-bound}
     |\sprod{\ph}{V_{\alpha^{-1}}\ph}| \leq
     C\alpha^{-1/5}\big(\|\ph\|^2 + \|\nabla|\ph|\|^2 \big)
\end{equation}
with some $C>0$ that is independent of $\alpha$ and $\ph$. By the
diamagnetic inequality and by \eqref{V-1-bound}, for any normalized
$\ph\in C_0^{\infty}(\R^3)$,
\begin{align*}
      \EE_{\lambda^2}(A_{\alpha^{-1}}, \lambda V_{\alpha^{-1}}, \ph)
      &\geq \EE_{\lambda^2}(0,0,|\ph|) - C\lambda \alpha^{-1/5}\big(1+\|\nabla|\ph|\|^2\big)\\
     &\geq  (1- C\lambda \alpha^{-1/5}) E_P(0,0,\frac{\lambda^2}{1-C\lambda \alpha^{-1/5}}) - C\lambda \alpha^{-1/5}\\
    &=\frac{\lambda^4}{1-C\lambda \alpha^{-1/5}} E_P -  C\lambda \alpha^{-1/5}.
\end{align*}
Hence $E_P(A_{\alpha^{-1}},\lambda V_{\alpha^{-1}},\lambda^2) \geq
\lambda^4 E_P- O(\alpha^{-1/5})$.

It remains to prove the upper bound on  $E_P(A_{\alpha^{-1}},\lambda
V_{\alpha^{-1}}, \lambda ^2) $ in (c). To this end let $\ph_0$ be a
(real-valued) minimizer of the Pekar functional \cite{Lieb1977},
i.e. $\EE(\ph_0) = E_P$ and let $\ph_\lambda$ be scaled in such a way
that  $\EE_{\lambda^2}(\ph_\lambda) = \lambda^4 \EE(\ph_0)$. Then 
\begin{align*}
     E_P(A_{\alpha^{-1}},\lambda V_{\alpha^{-1}}, \lambda ^2) & \leq
     \EE_{\lambda^2} (A_{\alpha^{-1}},\lambda
     V_{\alpha^{-1}},\ph_\lambda) \\
    & =\lambda^4 E_P +  \|A_{\alpha^{-1}}\ph_\lambda\|^2 +\lambda
    \sprod{\ph_\lambda}{V_{\alpha^{-1}}\ph_\lambda} \\
    & = \lambda^4 E_P + O(\alpha^{-1/5}).
\end{align*}
We have used that
$\Rea\sprod{-i\nabla\ph_\lambda}{A_{\alpha^{-1}}\ph_\lambda}=0$, since
$\ph_\lambda$ is real-valued, and \eqref{hardy-sobolev} from Lemma~\ref{lm:weakfieldlimit}
\end{proof}

\section{Existence of Bipolarons}

Let $A,V$ be vector and scalar potentials, respectively, satisfying the assumptions of the Section~\ref{sec:strong-field}. Let $\alpha, U>0$. We define a two-body Hamiltonian 
$H_U^{A,V}$ on $L^2(\R^6)$ by 
\begin{equation}\label{2-body}
      H_U^{A,V} := (D_A^2+V)\otimes 1 + 1\otimes (D_A^2+V) + UV_C
\end{equation}
where $V_C(x,y):=|x-y|^{-1}$. More precisely, we define $H_U^{A,V}$ in terms of the quadratic form given by the right hand side of 
\eqref{2-body} on $C_0^{\infty}(\R^6)$. Its form domain will be denoted by $H^1_{A}(\R^6)$.  

In the two-polaron model of Fr\"ohlich, the minimal energy, $E_{2}(A,V,U,\alpha)$ of two electrons in a polar crystal is the infimum of the quadratic form
\begin{equation}\label{2-el-F}
     \sprod{\psi}{(H_U^{A,V}\otimes 1) \psi} +N(\psi) +\sqrt{\alpha}W_2(\psi),
\end{equation}
whose domain is the intersection of $H^1_{A}(\R^6)\otimes \FF_0$ with the unit sphere of the Hilbert space $L^2(\R^6)\otimes \FF$. Here $N(\psi)$ and $W_2(\psi)$ are defined by expressions similar to \eqref{number} and \eqref{e-p-inter}, the main difference being that $e^{ikx}$ in \eqref{e-p-inter} becomes $e^{ikx_1}+e^{ikx_2}$ in $W_2(\psi)$. 

In the two-polaron model of Pekar and Tomasevich  the minimal energy, $E_{PT}(A,V, U,\alpha)$ of two electrons in a polar crystal is the infimum of the functional
$$
    \sprod{\ph}{H_U^{A,V} \ph}  -\alpha \int\!\!\int\frac{\rho(x)\rho(y)}{|x-y|}\, dxdy
$$
on the $L^2$-unit sphere of $H^1_{A}(\R^6)$, where
$\rho(x):= \int \big(|\ph(x,y)|^2+ |\ph(y,x)|^2\big)\, dy$. For any
fixed $\ph\in L^2(\R^6)\cap H^1_{A}(\R^6)$, $\|\ph\| =1$, and corresponding density $\rho$ the identity
$$
\inf_{\|\eta\| =1} \left(N(\ph\otimes\eta) +\sqrt{\alpha} W_2(\ph\otimes \eta)\right) = - \alpha \int\!\!\int\frac{\rho(x)\rho(y)}{|x-y|}\, dxdy
$$
holds. By choosing $\psi = \ph\otimes\eta$ in \eqref{2-el-F} it follows that
\begin{equation}
    E_{PT}(A,V,U,\alpha) \geq E_{2}(A,V,U,\alpha), \label{inequ:pt-froe}
\end{equation}
which, together with Theorem~\ref{limit-thm} and the results of
\cite{GHW2012} enables us to prove the following theorem on the
\emph{binding of polarons}:

\begin{theorem}\label{binding-thm}
Suppose $A,V\in L^2_{\rm loc}(\R^3)$ and that $V$ is infinitesimally
operator bounded w.r.to $\Delta$. Let $A_\alpha(x)=\alpha A(\alpha x)$
and $V_\alpha(x) = \alpha^2 V(\alpha x)$. If the Pekar functional
\eqref{Pekar-AV} attains its minimum, then there exists $u_{A,V}>2$
such that for $U<\alpha u_{A,V}$ and $\alpha$ large enough
$$
       2 E(A_\alpha,V_\alpha,\alpha) > E_{2}(A_\alpha,V_\alpha,U,\alpha).
$$
\end{theorem}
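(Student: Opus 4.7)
The plan is to sandwich $E_2(A_\alpha,V_\alpha,U,\alpha)$ from above by a rescaled Pekar--Tomasevich bipolaron energy (via \eqref{inequ:pt-froe}), to sandwich $2 E(A_\alpha,V_\alpha,\alpha)$ from below via Theorem~\ref{limit-thm}, and to close the resulting gap using the Pekar--Tomasevich binding result of \cite{GHW2012}.

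First I would establish the scaling identity $E_{PT}(A_\alpha,V_\alpha,U,\alpha)=\alpha^2 E_{PT}(A,V,U/\alpha,1)$ by means of the $L^2(\R^6)$-isometric substitution $\ph_\alpha(x,y):=\alpha^3\ph(\alpha x,\alpha y)$. A direct change of variables (the two-body analogue of \eqref{scaling-pekar}) shows that each of the kinetic, scalar-potential, Coulomb-repulsion, and Pekar-attraction terms in the Pekar--Tomasevich functional with data $(A_\alpha,V_\alpha,U,\alpha)$ evaluated at $\ph_\alpha$ equals $\alpha^2$ times the corresponding term in the functional with data $(A,V,U/\alpha,1)$ evaluated at $\ph$. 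Combined with \eqref{inequ:pt-froe} this yields $E_2(A_\alpha,V_\alpha,U,\alpha)\le\alpha^2 E_{PT}(A,V,U/\alpha,1)$. Simultaneously, Theorem~\ref{limit-thm} gives $2 E(A_\alpha,V_\alpha,\alpha)\geq 2\alpha^2 E_P(A,V)-2C\alpha^{9/5}$ for $\alpha$ large. Subtracting, the desired inequality reduces to
$$
   \alpha^2\bigl[\,2 E_P(A,V)-E_{PT}(A,V,U/\alpha,1)\bigr]>2C\alpha^{9/5}.
$$

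The final ingredient I would import from \cite{GHW2012}: under the hypothesis that the Pekar functional \eqref{Pekar-AV} attains its minimum, there exists a threshold $u_\star>2$ such that $E_{PT}(A,V,U',1)<2E_P(A,V)$ for every $U'<u_\star$. Choosing any $u_{A,V}\in(2,u_\star)$ and exploiting monotonicity of $U'\mapsto E_{PT}(A,V,U',1)$, I obtain the uniform positive gap $\delta:=2E_P(A,V)-E_{PT}(A,V,u_{A,V},1)>0$, so that for every admissible $U$ with $U/\alpha\leq u_{A,V}$ the left-hand side of the reduced inequality is bounded below by $\alpha^2\delta$. The bound $\alpha^2\delta>2C\alpha^{9/5}$ then holds as soon as $\alpha>(2C/\delta)^5$, which completes the argument.

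The main obstacle will be the last step: \cite{GHW2012} provides only pointwise strict binding for $U'\in(0,u_\star)$, whereas what is really needed is a gap that is uniform in $U'$ on a sub-interval and quantitatively large enough to dominate the $O(\alpha^{9/5})$ error coming from Proposition~\ref{the-lower-bound}. The upgrade is achieved by passing from the optimal threshold $u_\star$ to any $u_{A,V}<u_\star$ and invoking monotonicity of $E_{PT}$ in the repulsion parameter. Apart from this, the proof is a routine assembly of the two-body scaling identity, the bound \eqref{inequ:pt-froe}, and the already proven Theorem~\ref{limit-thm}.
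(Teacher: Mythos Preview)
Your proposal is correct and follows essentially the same route as the paper: the scaling identity for $E_{PT}$, the inequality \eqref{inequ:pt-froe}, Theorem~\ref{limit-thm}, and the Pekar--Tomasevich binding result from \cite{GHW2012} are combined in the same way. The only difference is cosmetic: the paper fixes $u=U/\alpha<u_{A,V}$ and lets the threshold on $\alpha$ depend on $u$, whereas you shrink the admissible interval to $u\le u_{A,V}<u_\star$ and use monotonicity of $E_{PT}$ in the repulsion parameter to obtain a uniform gap $\delta$ and hence a single $\alpha$-threshold valid for all such $U$; this is a mild sharpening rather than a different argument.
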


\begin{proof}
Let $U=\alpha u$. By a simple scaling argument
\begin{equation}\label{pekartom-scaling}
E_{PT} (A_{\alpha}, V_{\alpha}, \alpha u,\alpha) = \alpha^2 E_{PT} (A,V,u,1), 
\end{equation}
which is analogous to \eqref{scaling-pekar}. By Theorem~3.1 of \cite{GHW2012} there exists $u_{A,V}>2$ such that for $u<u_{A,V}$,
\begin{equation}
2E_P(A,V) >E_{PT}(A,V,u,1) .\label{pekar-binding}
\end{equation}
From Theorem~\ref{limit-thm}, \eqref{pekar-binding}, \eqref{pekartom-scaling}, and \eqref{inequ:pt-froe} it follows that, for $\alpha$ large enough,
\begin{align*}
2\alpha^{-2} E(A_{\alpha},V_{\alpha},\alpha) &= 2 E_P(A,V) - o(1)\\
&> E_{PT} (A,V,u,1)\\
&= \alpha^{-2} E_{PT} (A_{\alpha},V_{\alpha},\alpha u,\alpha)\\
&\ge  \alpha^{-2} E_{2} (A_{\alpha},V_{\alpha},U,\alpha),
\end{align*}
which proves the theorem.
\end{proof}


\end{document}